\tiny\color{gray},  
\theoremstyle{plain}
\newaliascnt{theorem}{dummy}
\newtheorem{theorem}[theorem]{Theorem}
\newaliascnt{proposition}{dummy}
\newaliascnt{corollary}{dummy}
\newaliascnt{lemma}{dummy}
\newaliascnt{conjecture}{dummy}
\theoremstyle{definition}
\newaliascnt{definition}{dummy}
\newaliascnt{example}{dummy}
\theoremstyle{remark}
\newaliascnt{remark}{dummy}
\def\BibTeX{{\rm B\kern-.05em{\sc i\kern-.025em b}\kern-.08em
    T\kern-.1667em\lower.7ex\hbox{E}\kern-.125emX}}
\begin{document}

\title{SmaRTLy: RTL Optimization with Logic Inferencing and Structural Rebuilding\\
}

\author{\IEEEauthorblockN{Chengxi Li}
\IEEEauthorblockA{\textit{Dept. of CSE} \\
\textit{The Chinese University of Hong Kong}\\
Hong Kong, China \\
cxli23@cse.cuhk.edu.hk}
\and
\IEEEauthorblockN{Yang Sun}
\IEEEauthorblockA{\textit{Dept. of CSE} \\
\textit{The Chinese University of Hong Kong}\\
Hong Kong, China \\
ysun22@cse.cuhk.edu.hk}
\and
\IEEEauthorblockN{Lei Chen}
\IEEEauthorblockA{\textit{Noah's Ark Lab} \\
\textit{Huawei}\\
Hong Kong, China \\
lc.leichen@huawei.com}
\and
\IEEEauthorblockN{Yiwen Wang}
\IEEEauthorblockA{\textit{Noah's Ark Lab} \\
\textit{Huawei}\\
Hong Kong, China \\
wangyiwen19@huawei.com}
\and
\IEEEauthorblockN{Mingxuan Yuan}
\IEEEauthorblockA{\textit{Noah's Ark Lab} \\
\textit{Huawei}\\
Hong Kong, China \\
yuan.mingxuan@huawei.com}
\and
\IEEEauthorblockN{Evangeline F.Y. Young}
\IEEEauthorblockA{\textit{Dept. of CSE} \\
\textit{The Chinese University of Hong Kong}\\
Hong Kong, China \\
fyyoung@cse.cuhk.edu.hk}
}

\maketitle

\begin{abstract}
This paper proposes smaRTLy: a new optimization technique for multiplexers in Register-Transfer Level (RTL) logic synthesis. Multiplexer trees are very common in RTL designs, and traditional tools like Yosys optimize them by traversing the tree and monitoring control port values. 
However, this method does not fully exploit the intrinsic logical relationships among signals or the potential for structural optimization.
To address these limitations, we develop innovative strategies to remove redundant multiplexer trees and restructure the remaining ones, significantly reducing the overall gate count.
We evaluate smaRTLy on the IWLS-2005 and RISC-V benchmarks, achieving an additional 8.95\% reduction in AIG area compared to Yosys.
We also evaluate smaRTLy on an industrial benchmark in the scale of millions of gates, results show that smaRTLy can remove 47.2\% more AIG area than Yosys.
These results demonstrate the effectiveness of our logic inferencing and structural rebuilding techniques in enhancing the RTL optimization process, leading to more efficient hardware designs.
\end{abstract}

\begin{IEEEkeywords}
Multiplexer, Optimization, RTL, Synthesis, Structure
\end{IEEEkeywords}

\section{Introduction}
Logic optimization is an important part in logic synthesis.
Multiplexers, often abbreviated as MUX, are common and important in digital circuits.
In communication systems, they enable the transmission of multiple signals over a single communication line, thereby optimizing the use of bandwidth. 
In data routing, multiplexers facilitate efficient transfer of data between different parts of a computer system. 
Additionally, they are used in implementing Boolean functions and in the design of complex digital circuits\cite{macko2013managing,siebert2013delay}.
Therefore, optimization on multiplexers plays an important role in logic optimization\cite{lin2022novelrewrite,liu2023rethinking,liu2024unified,sun2024massively}.

A common 2-to-1 MUX consists of two data input ports ($A$ and $B$), a single-bit control port ($S$), and an output port ($Y$). 
More complex n-to-1 MUX can be formed by combining multiple 2-to-1 MUXs.
Multiple MUX can form a tree in a combinational circuit to represent \textit{if-else} and \textit{case} statements in Hardware Description Language (HDL), which is called muxtree\cite{metzgen2005multiplexer}.

Many logic synthesis tools include optimization pass on muxtree\cite{wang2023optimization, kohutka2014faster, pivstek2010optimization}.
Yosys\cite{wolf2016yosys}, a famous open-source Verilog HDL synthesis tool, primarily employs the \textit{opt\_muxtree} pass to optimize multiplexer trees. This pass analyzes control signals to identify and remove never-active branches by traversing the multiplexer trees and monitoring the values of visited control ports.
A MUX will be removed if it shares the same control signal with visited MUXs.
Specifically, it will optimize muxtree in the following two situations. First, when the value of the control port is determined in an ancestor MUX, as shown in \autoref{fig:origin_1}, the muxtree of $Y=S\ ?\ (S \ ?\ A:B):C$ can be optimized into $Y=S\ ?\ A:C$. Second, when the value of the data port is determined in an ancestor MUX, as shown in \autoref{fig:origin_2}, the muxtree of $Y=S\ ?\ (A \ ?\ S:B):C$ can be optimized into $Y=S\ ?\ (A \ ?\ 1:B):C$.

\begin{figure}[tb]
	\centering
	\includegraphics[width=0.38\textwidth]{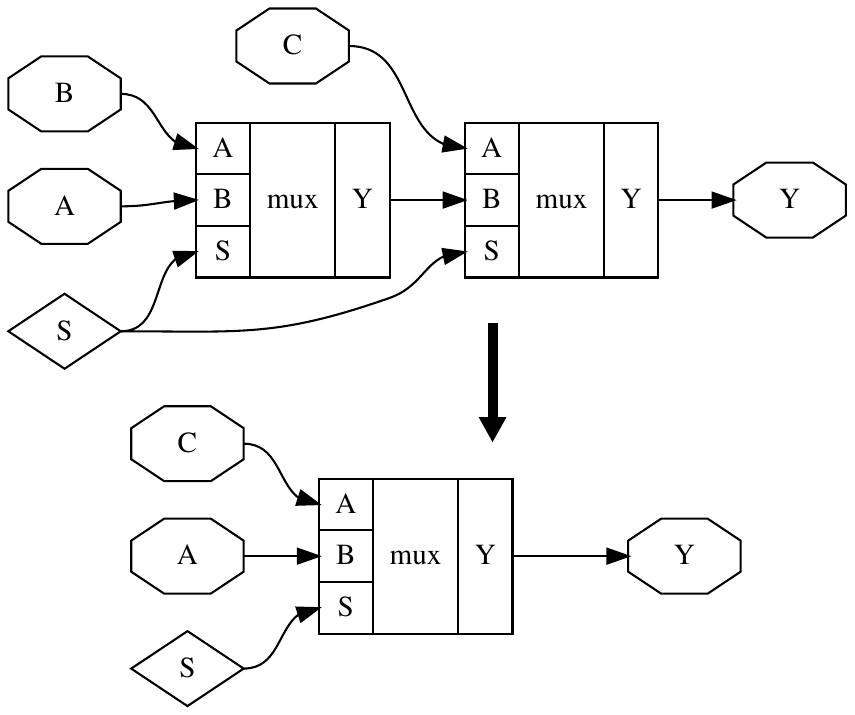}
	\caption{The multiplexer with the same control as the ancestor multiplexer can be optimized.}
    \label{fig:origin_1}
\end{figure}

\begin{figure}[tb]
	\centering
	\includegraphics[width=0.38\textwidth]{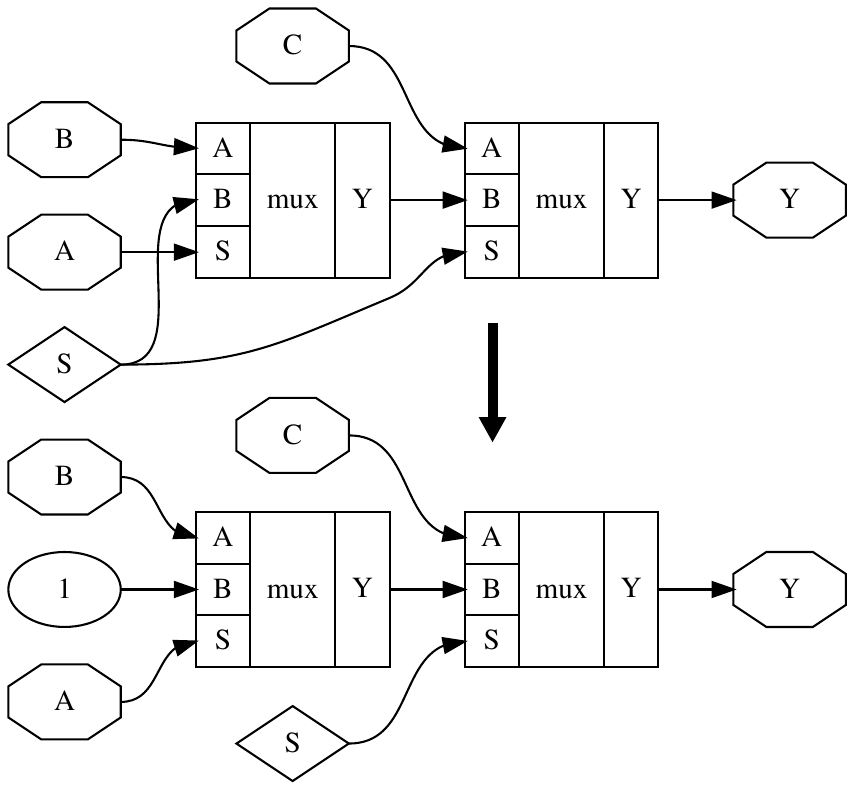}
	\caption{The data port with the same control signals as the ancestor multiplexer can be optimized.}
    \label{fig:origin_2}
\end{figure}

The method is feasible and efficient. In most cases, it can remove abundant gates and wires.
However, the method has the following disadvantages:

\begin{itemize}
    \item Limited logical analysis: The method ignores intrinsic logical relationships between signals. Traditional optimization occurs only when MUXs share identical signals, failing to optimize when control ports differ but are dependent. This limitation is illustrated in \autoref{fig:simulation}, the muxtree of $Y=S\ ?\ ((S|R) \ ?\ A:B):C$ can be optimized into $Y=S\ ?\ A:C$, because $S|R=1$ when $S=1$. However, this can be ignored in traditional methods.
    \item Structural inflexibility: Many approaches merely remove unused branches of a muxtree without changing its structure. However, many muxtrees can be optimized by rebuilding its structure for improved performance. A more detailed explanation will be given in \autoref{sec:rebuild}.
\end{itemize}

\begin{figure}[tb]
	\centering
	\includegraphics[width=0.5\textwidth]{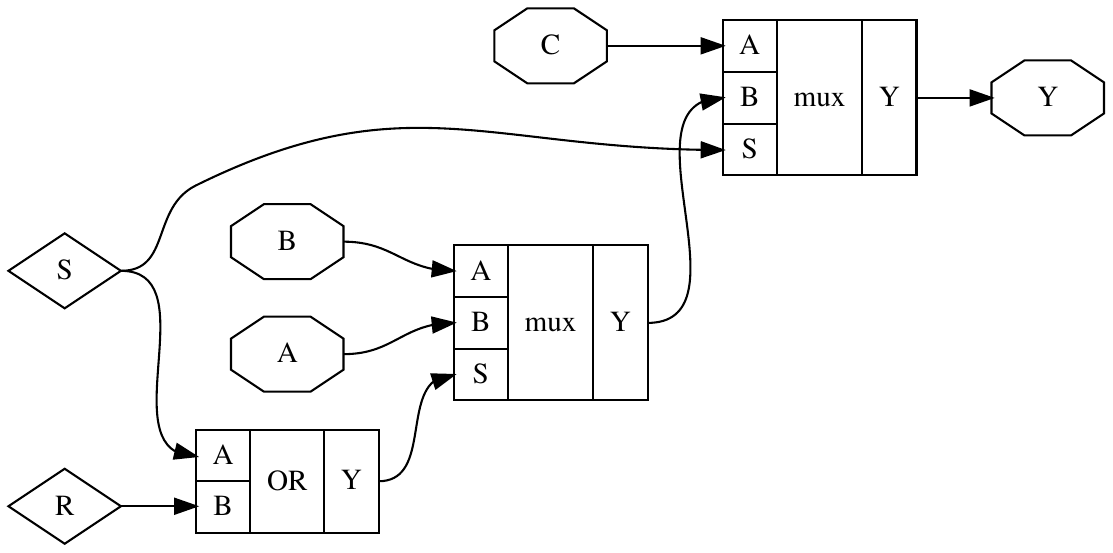}
	\caption{The muxtree with different and logically dependent control signals.}
    \label{fig:simulation}
\end{figure}

To address these problems, we propose smaRTLy, which contains two optimizations, SAT-based redundancy elimination on sub-graphs and muxtree restructuring.

For SAT-based redundancy elimination, we focus on optimizing multiplexers with different control signals. Our approach involves analyzing the control port values and using an SAT solver to identify and eliminate redundant multiplexers. To manage complexity, we constrain the analysis to a sub-graph of the circuit and minimize the size of the sub-graph without sacrificing correctness.

For muxtree restructuring, we target at muxtrees with inefficient structures to eliminate more gates.
We identify these structures, assess their efficiency, and use an Algebraic Decision Diagram (ADD) to rebuild the muxtree with fewer multiplexers~\cite{bahar1997algebric}.
This process involves reassigning control signals and outputs to optimize the overall design.

SmaRTLy were evaluated on 10 cases from IWLS-2005~\cite{albrecht2005iwls} and RISC-V~\cite{waterman2014risc} benchmarks compatible with Yosys. Results show an additional 8.95\% reduction in AIG areas compared to Yosys.
The methods can be combined for enhanced performance and have the potential to integrate with other optimization techniques.
Besides, smaRTLy can perform much better in an industrial benchmark, where the average size reaches millions of gates, and smaRTLy can remove 47.2\% more AIG area than Yosys.
This shows the effectiveness of smaRTLy on practical cases.

SmaRTLy offers improvements in circuit optimization, potentially leading to better PPA (Power, Performance, Area). As digital circuits grow more complex, such optimization techniques become increasingly valuable in managing design complexity and improving the overall system performance.
Our contributions in this work are summarized below:
\begin{itemize}
    \item We propose an SAT-based redundancy elimination to capture the logical relationship between signals.
    \item A novel method was developed that can greatly reduce the size of an RTL sub-graph.
    \item We propose an efficient muxtree restructuring algorithm using heuristics based on Algebraic Decision Diagram (ADD).
    \item Experimental results demonstrate improved optimization compared to traditional RTL optimizations in Yosys, highlighting the effectiveness of smaRTLy.
\end{itemize}

The rest of the paper is organized as follows.
Section 2 presents SAT-based redundancy elimination.
Section 3 presents Muxtree Restructuring, followed by experiment results in Section 4. 
Finally, Section 5 concludes the paper.
\section{SAT-based Redundancy Elimination}





The frequent use of \textit{if-else} and \textit{case} statements in RTL-level code often results in a significant number of multiplexers (MUXs) being generated during logical synthesis.
This is a natural consequence of designing complex circuits, which typically involve multiple layers of nesting and numerous conditional statements. 
As a result, muxtrees become a prevalent feature in today's designs. 
While muxtrees are essential for implementing intricate logic, they can also lead to inefficiencies, particularly when they contain redundant MUXs. 
As the design logic becomes increasingly intricate, it is common to find redundant MUXs within these muxtrees.

A MUX is considered redundant if the value of its control port is already determined when traversing from the root to the MUX itself. 
In other words, the control signal's value can be deduced from known signals along the path. 
Similarly, optimization can be applied to the data port by analyzing these signals.
Current tools like Yosys primarily optimize MUXs with identical control signals. 
However, this approach overlooks many scenarios in digital circuits where control signals are interconnected through logic gates. 
For instance, complex logical relationships between control signals are common, as illustrated in \autoref{fig:simulation}. 

To address this limitation, we propose a SAT-based redundancy elimination method on the sub-graph to uncover deeper logical relationships. SAT (Boolean Satisfiability) solvers are powerful tools for determining whether there exists an assignment of variables that satisfies a given Boolean formula. By simplifying the sub-graph based on signal locations and applying SAT-based techniques on the sub-graph to the MUX optimization problem, it becomes possible to identify and eliminate redundancies that are not detectable through simpler methods like simulation.

SmaRTLy begins by constructing a sub-graph during the traversal of the muxtree. 
When a new MUX is encountered, all logical gates within a specified distance \textit{k} from the control port are incorporated, provided they are not already included in the sub-graph. 
This approach allows for a more comprehensive analysis of the logical relationships between signals.
However, if \textit{k} is large, the size of the sub-graph will be too large for the SAT solver to take too much time; if \textit{k} is small, the sub-graph will not contain enough nodes for the SAT solver to infer the value of the target.
This is because the sub-graph contains too many irrelevant gates and signals, making the sub-graph bloated.
To keep the sub-graph manageable, smaRTLy only adds potential signals whose values might be affected by known signals. 
We propose the scenarios in which the signal $S$ can impact $T$:

\begin{theorem}
In the RTL level circuit, a signal $S$ is possible to affect the value of signal $T$, only if:
\begin{itemize}
    \item $S$ is an ancestor of $T$, or
    \item $T$ is an ancestor of $S$, or
    \item $S$ and $T$ have a common ancestor.
\end{itemize}
\label{thm:subgraph}
\end{theorem}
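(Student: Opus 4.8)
The plan is to prove the contrapositive. Suppose $S$ is not an ancestor of $T$, $T$ is not an ancestor of $S$, and $S$ and $T$ have no common ancestor (and $S\neq T$, the equal case being trivial); I will show that fixing $S$ to any value leaves every value of $T$ attainable, so $S$ cannot affect $T$. First I would pin down the model: an RTL combinational circuit is a directed acyclic graph whose sources are the primary inputs; every signal $X$ computes a function $f_X$ of the primary inputs, and ``$U$ is an ancestor of $V$'' means $U$ lies in the transitive fan-in cone of $V$ (there is a directed signal-flow path from $U$ to $V$). I would read ``$S$ affects $T$'' as the property relevant to the SAT-based pruning: there is a value $v$ such that the set of values of $T$ consistent with $S=v$ is a proper subset of the set of values $T$ can take overall. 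Knowing $S$ helps the solver narrow $T$ exactly when this holds.

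The engine of the proof is a support-containment lemma: for every signal $X$, the value $f_X$ is determined by the primary inputs lying in the fan-in cone of $X$ (together with $X$ itself, in the degenerate case that $X$ is a primary input). This is immediate by induction along a topological order of the DAG --- a gate's output is a function of its fan-in signals, whose supports are, inductively, among their own primary-input ancestors, and the ancestor relation is transitive. I would then observe that the three excluded conditions are precisely what forbids a shared primary-input ancestor: if some primary input $p$ fed into both $S$ and $T$, then either $p$ is distinct from both and is a common ancestor (third bullet), or $p=S$ and $S$ is an ancestor of $T$ (first bullet), or $p=T$ and $T$ is an ancestor of $S$ (second bullet). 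Hence under our hypotheses $\mathrm{supp}(f_S)$ and $\mathrm{supp}(f_T)$ are disjoint sets of primary inputs.

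The conclusion then follows from the independence of primary inputs: given any attainable value $v$ of $S$ and any attainable value $w$ of $T$, choose an assignment to $\mathrm{supp}(f_S)$ realizing $v$ and an assignment to $\mathrm{supp}(f_T)$ realizing $w$ --- these cannot conflict, since the two supports are disjoint --- and extend arbitrarily to the remaining inputs. Thus every value of $T$ is consistent with $S=v$, for every $v$, so $S$ does not affect $T$, which establishes the contrapositive.

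I expect the main obstacle to be the formalization rather than any calculation: choosing a definition of ``affect'' that genuinely matches what the SAT solver exploits during sub-graph construction, correctly handling the degenerate cases (constant-valued gates, a signal that is itself a primary input, a signal whose function ignores part of its structural fan-in), and making the reduction from an arbitrary common ancestor down to a primary-input common ancestor fully rigorous. Once the model and the support lemma are stated carefully, the remaining steps are routine.
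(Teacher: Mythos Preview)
Your proposal is correct and follows the same contrapositive strategy as the paper, which simply observes that the complement of the three bullets is ``$S$ and $T$ are disconnected'' or ``$S$ and $T$ have only a common descendant,'' and asserts that in either case $S$ cannot affect $T$. Your version is considerably more fleshed out---formalizing the DAG model, isolating the disjoint-support lemma, and invoking independence of primary inputs---whereas the paper leaves all of that implicit in a two-line sketch.
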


The proof is not complicated; we only need to consider the scenarios not in \autoref{thm:subgraph}:

\begin{proof}
If the relationship between signal $S$ and signal $T$ is not in the three scenarios of \autoref{thm:subgraph}, then:
\begin{itemize}
    \item There's no path from $S$ to $T$ (i.e. $S$ and $T$ are not connected), or
    \item $S$ and $T$ only have a common descendant.
\end{itemize}
In both scenarios, $S$ is impossible to affect the value of $T$.
\end{proof}

\begin{figure}[tb]
	\centering
	\includegraphics[width=0.25\textwidth]{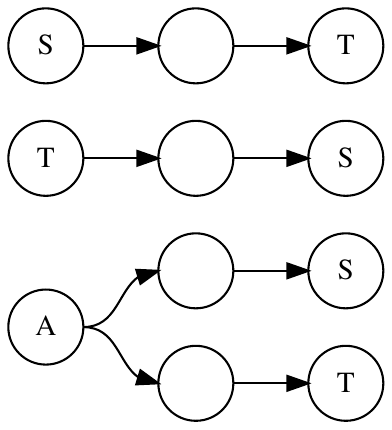}
	\caption{The three situations where S(T) can affect T(S).}
    \label{fig:subgraph}
\end{figure}

The three scenarios are shown in \autoref{fig:subgraph}. We can find that $T$ is also possible to affect $S$:

\begin{theorem}
If signal $S$ is possible to affect the value of signal $T$, then $T$ is also possible to affect the value of $S$.
\label{thm:subgrap2}
\end{theorem}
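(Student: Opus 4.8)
The plan is to derive this immediately from \autoref{thm:subgraph}. Applying that result to the hypothesis, the ordered pair $(S,T)$ must fall into one of the three listed configurations: $S$ is an ancestor of $T$, $T$ is an ancestor of $S$, or $S$ and $T$ share a common ancestor. The crucial observation is that, regarded as a set of three alternatives, this list is invariant under exchanging the two signals: the first and second alternatives map to one another, and the third is manifestly symmetric. Hence, whenever $(S,T)$ satisfies one of the three conditions, so does $(T,S)$, which is exactly the structural precondition for $T$ to be able to affect $S$. I would organize this as a three-case split (ancestor / descendant / common ancestor) and simply note that each case for $(S,T)$ is, verbatim, one of the cases for $(T,S)$.

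The one subtle point — and the step I expect to be the real obstacle — is the gap between the \emph{``only if''} of \autoref{thm:subgraph} and the \emph{``if''} direction needed to actually conclude ``$T$ is possible to affect $S$'': \autoref{thm:subgraph} gives the three conditions as necessary, not sufficient. To close the argument I would first upgrade \autoref{thm:subgraph} to a characterization, proving the converse by exhibiting, in each configuration, a concrete coupling between the feasible valuations of the two signals — if $T$ lies in the fan-in cone of $S$ (or vice versa) the values along the connecting path constrain both endpoints; if $S$ and $T$ have a common ancestor $U$, then fixing $S$ may restrict the feasible values of $U$, which in turn restricts $T$. This reflects the intended meaning of ``affect'' in the SAT-based setting, namely that an assignment to one signal can rule out some assignment to the other, a relation that is symmetric by construction.

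With that characterization in hand, \autoref{thm:subgrap2} is essentially a tautology: the statements ``$(S,T)$ belongs to the list'' and ``$(T,S)$ belongs to the list'' coincide, so ``$S$ can affect $T$'' and ``$T$ can affect $S$'' are equivalent. If one prefers to avoid strengthening \autoref{thm:subgraph}, an alternative (and arguably cleaner) route is to bypass the structural characterization entirely and argue directly from the definition of logical dependence: ``$X$ can affect $Y$'' means the set of feasible valuations of the sub-graph does not factor as an independent product over $X$ and $Y$, and non-product-ness of a joint constraint is symmetric in its arguments. I would present the structural argument as the main line and mention this definitional symmetry as a remark.
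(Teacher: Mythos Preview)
Your three-case symmetry argument is exactly the paper's proof: it applies \autoref{thm:subgraph}, swaps the roles of $S$ and $T$ in each of the three alternatives, and concludes. Your additional care about the ``only if'' versus ``if'' gap is more scrupulous than the paper, which silently treats the structural conditions as sufficient; your proposed fixes (upgrading \autoref{thm:subgraph} to a characterization, or appealing directly to the symmetry of logical dependence) are reasonable but go beyond what the paper actually does.
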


\begin{proof}
When $S$ is possible to affect the value of $T$:
\begin{itemize}
    \item If $S$ is an ancestor of $T$, then $T$ is a descendant of $S$.
    \item If $T$ is an ancestor of $S$, then $S$ is a descendant of $T$.
    \item If $S$ and $T$ have a common ancestor, then $T$ and $S$ have a common ancestor
\end{itemize}
So $T$ is also possible to affect the value of $S$.
\end{proof}

From~\autoref{thm:subgraph} and \autoref{thm:subgrap2}, we can see that these signals can be partitioned into groups such that signals in the same group will affect each other while those from different groups will not. This allows us to disregard gates without potential signals in effect, streamlining our analysis. 
The method can dismiss about 80\% gates in the sub-graph, greatly accelerating the inference of the SAT solver.
Additionally, smaRTLy excludes sequential gates to maintain the sub-graph as a Directed Acyclic Graph (DAG).

To enhance the efficiency of the SAT solver, smaRTLy then employs inference rules. 
Considering that the logical relationships are often not overly complex, as illustrated in \autoref{fig:simulation}, straightforward inferences can help reduce unknown signals. 
smaRTLy apply the inference rules shown in \autoref{tab:inference_rules} to the known value signals. 
If a condition matches, the corresponding signal in the result becomes a new known value signal.


\begin{table}[tb]
\caption{Inference rules for \textit{OR} cells.}
\begin{center}
    \normalsize
    \renewcommand\arraystretch{1.5}
    \begin{tabular}{c|c}
       Condition & Result \\ \hline
       $a=true$ & $a|b = true$ \\
       $b=true$ & $a|b=true$ \\
       $a=b=false$ & $a|b = false$ \\
       $a|b=false$ & $a=b=false$ \\
       $a|b=true\ a=false$ & $b=true$ \\
       $a|b=true\ b=false$ & $a=true$ \\
    \end{tabular}
\label{tab:inference_rules}
\end{center}
\end{table}


Lastly, smaRTLy performs both simulation and SAT solver techniques on the sub-graph to determine signal values. 
To optimize efficiency, we choose between these methods based on the number of inputs requiring assignment. 
For a smaller number of inputs, simulation is more efficient, while the SAT solver is better suited for handling larger sets of inputs.
The determination of a signal $S$ is achieved by checking if either $SAT(S=0)=false$ or $SAT(S=1)=false$, indicating that the signal's value is fixed. 
In smaRTLy, we employ MiniSAT \cite{sorensson2005minisat}, a robust and efficient SAT solver, to handle complex logical evaluations. While there are more recent SAT solvers that might offer improved performance in certain scenarios, MiniSAT remains a popular choice due to its balance of efficiency and ease of integration.

To prevent excessive computational overhead, we introduce a threshold for the number of inputs. 
If the number of inputs in the sub-graph becomes excessively large, they may choose to forgo the SAT process. 
This decision is made to prevent the optimization process from becoming a bottleneck in the overall circuit synthesis workflow.
This selective approach ensures that our method remains practical and responsive to varying complexities within the sub-graph.
\section{Muxtree Restructuring}
\label{sec:rebuild}


Yosys focuses on removing redundant MUXs from muxtrees without changing their structure and control ports. 
However, reconstructing these muxtrees into more efficient structures makes it possible to reduce the number of MUXs and connected gates further. 
This structural change can minimize resource usage and decrease circuit area and delay, ultimately enhancing the circuit's performance and sustainability.

\begin{lstlisting}[caption=HDL code for a case statement, label=list:case_statement, float]
case(S)
    2'b00:   Y = p0;
    2'b01:   Y = p1;
    2'b10:   Y = p2;
    default: Y = p3;
endcase
\end{lstlisting}

\begin{figure}[h!]
	\centering
	\includegraphics[width=0.5\textwidth]{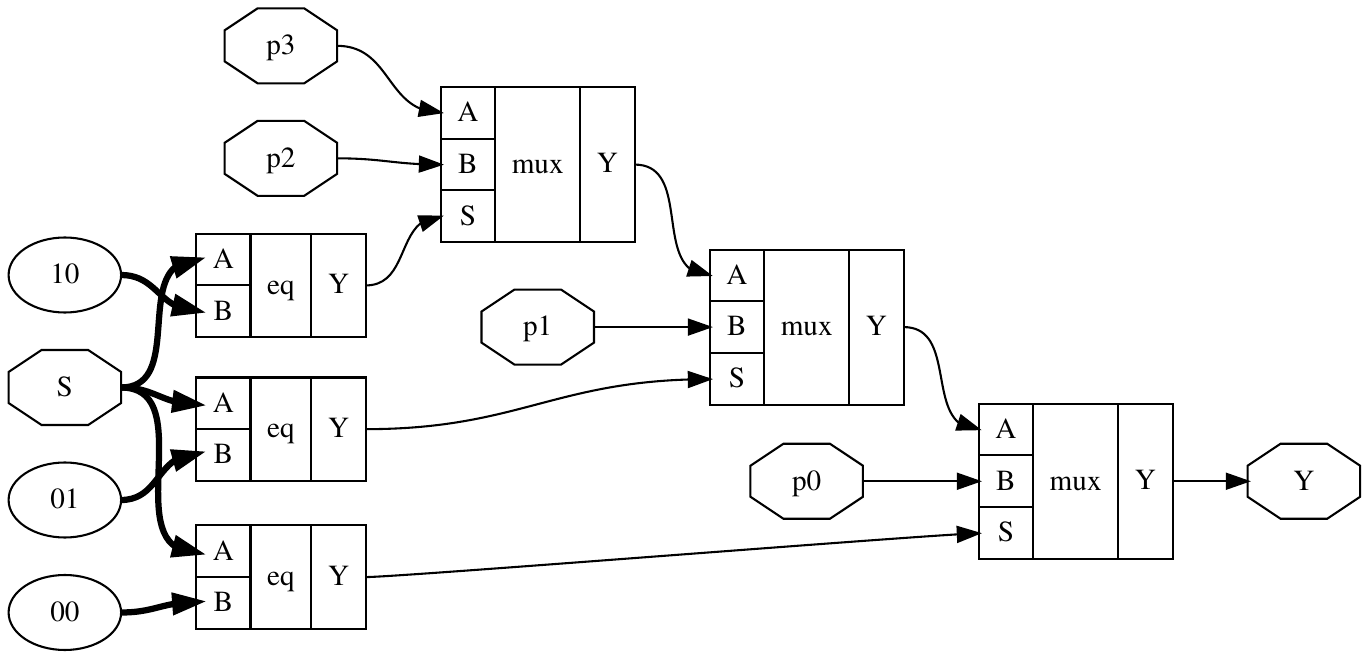}
	\caption{A muxtree for the $case$ statement in \autoref{list:case_statement}. The muxtree is a chain.}
    \label{fig:muxtree_case}
\end{figure}

\begin{figure}[h!]
	\centering
	\includegraphics[width=0.5\textwidth]{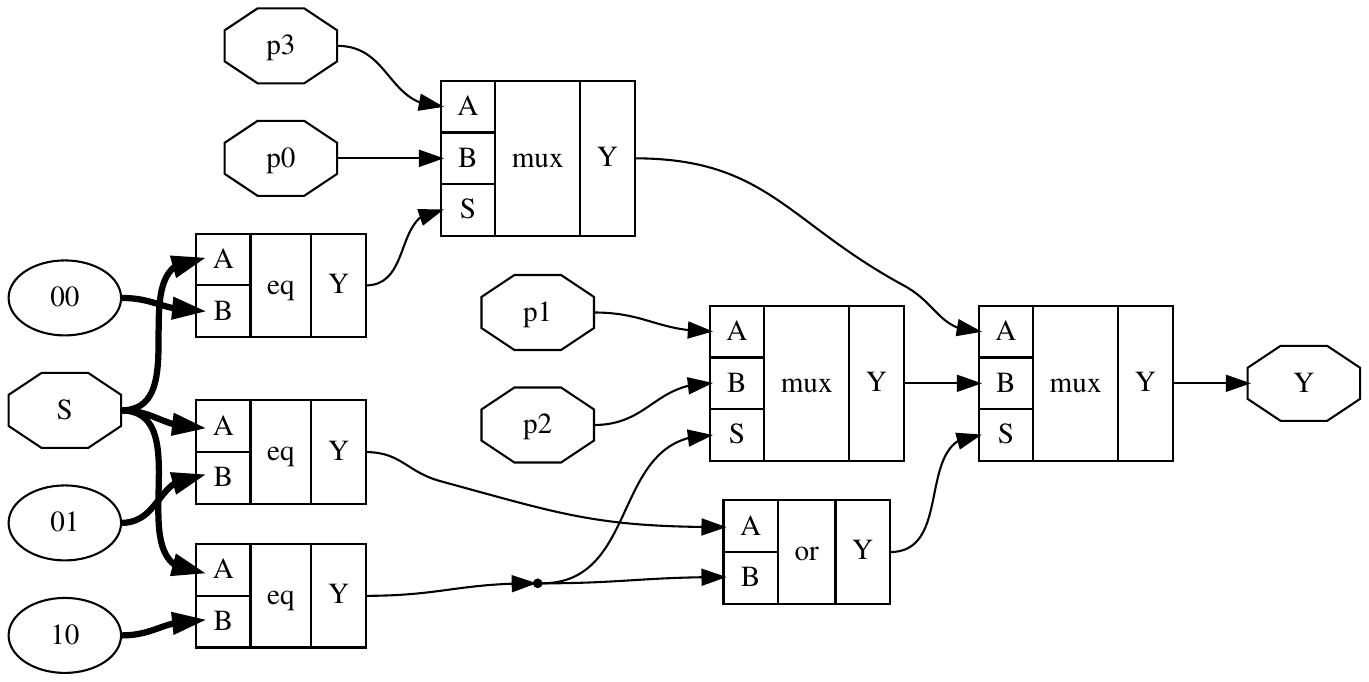}
	\caption{Another muxtree for the $case$ statement in \autoref{list:case_statement}. The muxtree is a full binary tree.}
    \label{fig:muxtree_case2}
\end{figure}

In general, smaRTLy will rebuild muxtrees with a single control signal into a better structure.
For example, the HDL code in \autoref{list:case_statement} is represented as a muxtree with three $eq$ gates and three MUX gates, as illustrated in \autoref{fig:muxtree_case} and \autoref{fig:muxtree_case2}. 
The muxtree in \autoref{fig:muxtree_case} forms a chain, which is inherently inefficient due to its sequential nature. 
In contrast, the muxtree in \autoref{fig:muxtree_case2} is designed as a full binary tree with an additional $or$ gate. 
While this structure improves efficiency, it fails to account for the logical relationships between control ports.
Their control signals are all signal $S$, so it can be formed as a more efficient structure
By revising the structure and control ports, the muxtree can be optimized to utilize only three MUX gates, as demonstrated in \autoref{fig:stucture}. 
This approach not only simplifies the design but also enhances its efficiency by eliminating redundant components. 
The three $eq$ gates are disconnected from the muxtree and will be removed if they do not connect to any other gates, further simplifying the circuit. 

\begin{figure}[h!]
	\centering
	\includegraphics[width=0.4\textwidth]{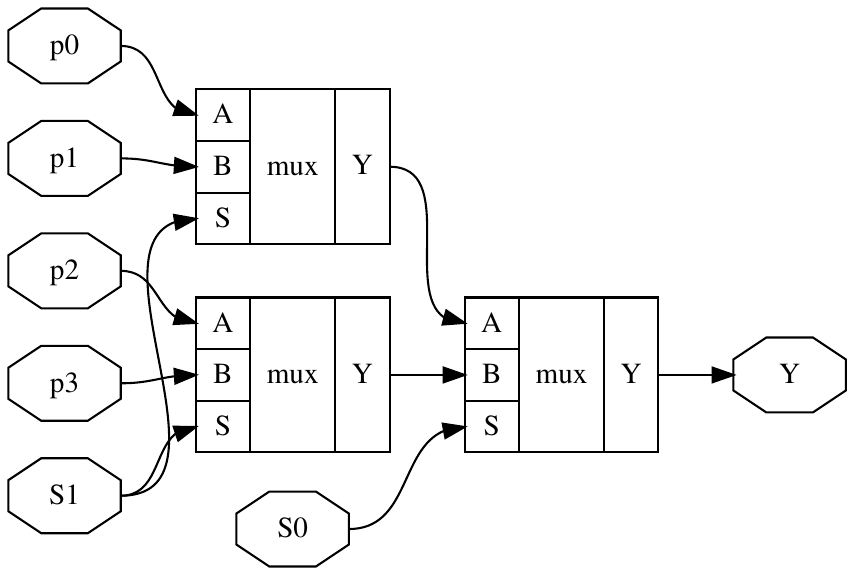}
	\caption{A rebuilt muxtree connected with fewer gates.}
    \label{fig:stucture}
\end{figure}



\begin{algorithm}
	\caption{Muxtree Restructuring} 
	\begin{algorithmic}[1]
		\For {$cell \ in \ \{ muxtree\_roots\}$}
                \If {$OnlyEq(cell) \ and \ SingleCtrl(cell)$}
                    \State $Assignment \leftarrow ADD(cell)$
                    \State $RemovedEq \leftarrow CountRemoved(cell)$
                    \State $height \leftarrow cell.ctrl.width$
                    \State $width \leftarrow cell.width$
                    \If {$Check$($Assignment$, $RemovedEq$, $height$, $width$)}
                        \State $Rebuild(cell, Assignment)$
                        \State $RemoveUnusedCell()$ \Comment{Implemented in other pass}
                    \EndIf
                \EndIf
		\EndFor
	\end{algorithmic} 
    \label{alg:rebuild}
\end{algorithm}

The algorithm of Muxtree Restructuring is shown in Algorithm \autoref{alg:rebuild}.
smaRTLy first identifies all muxtrees with structures that can be optimized. 
In this paper, we focus on muxtrees generated from \textit{case} statements, as they typically use \textit{eq} gates or \textit{logic\_not} as control ports (\textit{logic\_not} is a special \textit{eq}) and share the same control signal.

Next, smaRTLy assigns signals to the control ports of each MUX. 
A well-chosen assignment enables us to rebuild the muxtree with fewer MUXs. 
Conversely, a poor assignment can lead to excessive MUX usage, increasing complexity and resource consumption. 
For example, with the HDL code in \autoref{list:assignment}, a good assignment (e.g., assigning ($S_2$ to $S_0$) results in 3 MUXs, whereas a poor assignment ($S_0$ to $S_2$) results in 7 MUXs.
So smaRTLy collects all the inputs of control ports and corresponding outputs, representing them as an Algebraic Decision Diagram (ADD). 
ADD is a generalization of Binary Decision Diagram (BDD) from $\{0,1\}$ output sets to arbitrary finite output sets. 
For example, the output set of \autoref{fig:stucture} is $\{p_0,p_1,p_2,p_3\}$ and \autoref{fig:stucture} is the optimal solution in this case.
As BDD is a well-known NP-C problem, so is ADD.
Considering the trade-off between performance and efficiency, we use a simple heuristic algorithm: for each MUX, smaRTLy selects the signal that minimizes the total types of terminal nodes of the left and right children.
Take \autoref{list:assignment} as an example, if we select $S_2$ as the control signal of the first MUX, the number of types of terminal nodes is 4 (left: $\{p_1, p_2, p_3\}$, right: $\{p_0\}$); but if we select $S_0$, the number is 6 (left: $\{p_0, p_1, p_3\}$, right: $\{p_0, p_1, p_2\}$).
In experiments, the algorithm can obtain the optimal solution or close to the optimal solution in most cases. This is because the ADD of actual circuits is usually simple and does not require complex algorithms.

\begin{lstlisting}[caption=HDL code for another case statement, label=list:assignment, float]
case(S)
    3'b1zz:   Y = p0;
    3'b01z:   Y = p1;
    3'b001:   Y = p2;
    default:  Y = p3;
endcase
\end{lstlisting}



Next, smaRTLy will determine whether to rebuild the muxtree. 
If we roughly rebuild all muxtrees that meet the line 2 conditions of Algorithm \autoref{alg:rebuild}, the optimization effect is often poor and may even deteriorate the circuit.
This is because while smaRTLy can disconnect the $eq$ gates from the muxtree, they may remain in the circuit if connected to other gates.
This connectivity can affect overall circuit performance and complexity. 
Additionally, a poor assignment or particularly challenging cases might necessitate extra MUXs during reconstruction, which can increase resource usage and circuit area and delay.

To make an informed decision on whether to rebuild the muxtree, we consider several critical factors. 
First, smaRTLy assesses the potential for removing unnecessary gates and only handles the potential ones. 
Then, smaRTLy evaluates the height of the rebuilt muxtree, as the height determines the level of complexity in the muxtree structure. 
The width of the MUX is another important consideration, as a MUX gate is divided into several single-bit gates after technology mapping. 
smaRTLy also examines the number of additional MUXs needed to maintain the functional equivalence.
By taking these factors into account, smaRTLy can decide whether the benefits of muxtree restructuring outweigh the costs. 
If rebuilding is deemed advantageous, smaRTLy proceed with restructuring according to the optimal assignment. 
Finally, smaRTLy removes any redundant gates that are no longer connected to the muxtree.

This careful process ensures that the circuit remains optimized, reducing unnecessary components and enhancing overall performance.
\section{Experiment}
\subsection{Public Benchmarks}



SmaRTLy is implemented using C/C++.
All the experiments are implemented on Ubuntu 20.04.6 LTS.
The CPU is Intel(R) Xeon(R) Silver 4114 CPU @ 2.20GHz.
The version of GCC is 9.4.0.

We select the top 10 largest circuits written in Verilog from open-source datasets, including IWLS-2005 and RISC-V.
For fairness of comparison, we focus on the AIG area, specifically the number of AND gates in the optimized circuit. 
We exclude Flip-Flop gates from consideration to maintain consistency. 
We replaced the \textit{opt\_muxtree} pass in Yosys with smaRTLy and used the built-in command \textit{aigmap} in Yosys to convert netlists into AIG.
All the results generated by our program passed equivalence checking.

The results are presented in \autoref{tab:result}.
For each case, we first show its original AIG area. Then we show the AIG area of the case optimized by Yosys and smaRTLy, and the last column shows the percentage of area reduction of smaRTLy compared to Yosys, with positive numbers representing improvements.
In our benchmark, the range of AIG area is from 23709 to 10836722 and the average is 1415259.6. 
On average, smaRTLy achieved an extra reduction of 8.95\%, which signifies a substantial improvement in logic optimization, as detailed in \autoref{tab:result_split}.
This reduction is not very large because Yosys is optimized very well on the public dataset, leaving little room for optimization for smaRTLy. 
The average optimization of Yosys is 55\% and in some cases, it can achieve 94\%.
Therefore, this reduction is enough to demonstrate the effectiveness of smaRTLy in optimizing circuit designs, leading to more efficient and resource-saving implementations.

\begin{table}[tb]
\small
\centering
\caption{Comparison of AIG areas between Yosys and smaRTLy on benchmark circuits. The first four columns of each row are the case name, the original AIG area, the AIG area after Yosys and smaRTLy optimization. The last column is the proportion of area reduced by smaRTLy compared to Yosys.}
\label{tab:result}

\scalebox{1.0}{
\begin{tabular}{lrrrr}
\toprule
        Case & Original & Yosys & smaRTLy & Ratio \\
\midrule
top\_cache\_axi                & 10836722                           & 1301437                         & 977118   & 24.92\% \\ 
pci\_bridge32                  & 61847                              & 47411                           & 44369    & 6.42\%  \\ 
wb\_conmax                     & 336039                             & 123659                          & 89290    & 27.79\% \\ 
mem\_ctrl                      & 1118764                            & 65785                           & 65437    & 0.53\%  \\ 
wb\_dma                        & 592158                             & 74697                           & 64322    & 13.89\% \\ 
tv80                           & 772802                             & 46137                           & 45070    & 2.31\%  \\ 
usb\_funct                     & 76287                              & 40571                           & 39095    & 3.64\%  \\ 
ethernet                       & 124127                             & 113507                          & 112202   & 1.15\%  \\ 
riscv                          & 210141                             & 121280                          & 118689   & 2.14\%  \\ 
ac97\_ctrl                     & 23709                              & 23173                           & 21622    & 6.69\%  \\
\midrule
\textbf{Average}               & 1415259.6                          & 195765.7                        & 157721.4 & 8.95\%  \\ 
\bottomrule
\end{tabular}
}
\end{table}


We also present the results of two optimizations in smaRTLy. 
On average, SAT-based redundancy elimination (shown as SAT) achieves a reduction of 3.57\%, while muxtree rebuilding (shown as Rebuild) achieves 4.39\%. 
However, the effectiveness of these optimizations can vary significantly between cases.

For instance, in the case of \textit{top\_cache\_axi}, Rebuild reduces the AIG area by 24.91\%, whereas SAT achieves only a 0.01\% reduction. Conversely, for \textit{wb\_conmax}, SAT achieves a 19.05\% reduction, while Rebuild only manages 4.65\%.
These differences arise from the specific characteristics of each instance. Some examples, with complex data path logic, benefit more from SAT optimization. Others, with numerous case statements, are better suited to Rebuild optimization.


\begin{table}[tb]
\small
\centering
\caption{Reduction in AIG area by individual methods and their combined effect.}
\label{tab:result_split}

\scalebox{1.2}{
\begin{tabular}{lrrr}
\toprule
        Case & SAT & Rebuild & Full \\
\midrule
        top\_cache\_axi & 0.01\% & 24.91\% & 24.92\% \\
        pci\_bridge32 & 0.71\% & 2.01\% & 6.42\% \\
        wb\_conmax & 19.05\% & 4.65\% & 27.79\%  \\
        mem\_ctrl & 0.12\% & 0.47\% & 0.53\%  \\
        wb\_dma & 11.52\% & 0.80\% & 13.89\%  \\
        tv80 & 0.71\% & 1.61\% & 2.31\%  \\
        usb\_funct & 1.60\% & 1.69\% & 3.64\%  \\
        ethernet & 0.49\% & 0.48\% & 1.15\%  \\
        riscv & 0.17\% & 1.97\% & 2.14\%  \\
        ac97\_ctrl & 1.34\% & 5.36\% & 6.69\% \\
\midrule
        \textbf{Average} & 3.57\% & 4.39\% & 8.95\% \\
\bottomrule
\end{tabular}
}
\end{table}

In addition, Full optimization reduces more areas than SAT and Rebuild optimization combined in most cases. 
This is because SAT and Rebuild optimizations work together to reduce more areas.
For example, Rebuild optimization can reduce the height of muxtrees and simplify the control port, which will make the sub-graph smaller in SAT optimization.
In logical synthesis, combining multiple optimizations typically yields better results.

\subsection{Industrial Benchmarks}
We also tested smaRTLy on an industrial benchmark.
Due to confidentiality, we will only show the order of magnitude of the benchmark and the improvement of smaRTLy over Yosys.
The average AIG area size of this benchmark is in the millions, of which 37.5\% of the test points have an area of more than one million AIG nodes.

SmaRTLy can remove 47.2\% more AIG areas than Yosys.
The result far exceeds the performance of smaRTLy on IWLS-2005 and RISC-V.
This is because Yosys performs poorly on this industrial benchmark. In some cases, there is almost no optimization effect.
Moreover, the selection circuits are more common in the industrial dataset, so the proportion of MUX gates and PMUX gates is higher.
The good performance highlights the efficiency and effectiveness of smaRTLy in processing large-scale industrial applications.
\section{Conclusions}
In this paper, we propose smaRTLy on multiplexers in the logic synthesis of RTL-level circuits.
smaRTLy can deal with more complex relationships between the control ports and remove more gates by rebuilding the muxtree with a better structure.
They demonstrated good optimization performance on the public benchmark and industrial benchmark, reducing 8.95\% and 47.2\% more AIG area than Yosys respectively.
In addition, the two optimizations work well together to achieve better performance and have the potential to work with more optimizations in the future.

\bibliographystyle{plain}
\bibliography{reference}

\end{document}